\newtheorem{theorem}{Theorem}[section]
\newtheorem{lemma}[theorem]{Lemma}
\newtheorem{proposition}[theorem]{Proposition}
\newtheorem{definition}[theorem]{Definition}
\newenvironment{proof}[1][Proof]{\begin{trivlist}
\item[\hskip \labelsep {\bfseries #1}]}{\end{trivlist}}
\newenvironment{remark}[1][Remark]{\begin{trivlist}
\item[\hskip \labelsep {\bfseries #1}]}{\end{trivlist}}
\newcommand{\qed}{\nobreak \ifvmode \relax \else
      \ifdim\lastskip<1.5em \hskip-\lastskip
      \hskip1.5em plus0em minus0.5em \fi \nobreak
      \vrule height0.75em width0.5em depth0.25em\fi}
\newcommand{\Dleft}{[\hspace{-1.5pt}[}
\newcommand{\Dright}{]\hspace{-1.5pt}]}
\newcommand{\SN}[1]{\Dleft #1 \Dright}
\newcommand{\tombstone}{\nobreak \ifvmode \relax \else
      \ifdim\lastskip<1.5em \hskip-\lastskip
      \hskip0em plus0em minus0.5em \fi \nobreak
      \vrule height0.75em width0.5em depth0.25em\fi}
\newcommand{\proofend}{\flushright $\qed$}
\numberwithin{equation}{section}
\DeclareMathOperator{\Can}{Can}
\DeclareMathOperator{\Vect}{Vect}
\DeclareMathOperator{\Diff}{Diff}
\begin{document}
\bibliographystyle{plain}

\author{Andrew James Bruce \footnote{e-mail: \texttt{andrewjames.bruce@physics.org}}\\ \small \emph{School of Mathematics}\\ \small \emph{The University of Manchester}\\
\small \emph{Manchester}\\\small \emph{M13 9PL}}
\date{\today}
\title{On higher Poisson and Koszul--Schouten brackets}
\maketitle

\begin{abstract}
\noindent In this note we show how to construct a  homotopy BV-algebra on the algebra of differential forms over a higher Poisson manifold. The Lie derivative  along the higher Poisson structure provides the generating operator.
\end{abstract}

\noindent\begin{small} \textbf{Keywords:} Strong homotopy Lie algebra, derived bracket, higher Poisson bracket, higher Koszul--Schouten bracket, Batalin--Vilkoviski algebra, odd symplectic geometry.\end{small}\\

\section{Introduction and Background}\label{section1}

\noindent  Recall that a higher Poisson manifold is the pair $(M, P)$, where $M$ is a supermanifold and  the higher Poisson structure $P \in \mathfrak{X}^{*}(M) (= C^{\infty}(\Pi T^{*}M))$ is an even parity (but otherwise arbitrary) multivector field that satisfies the ``classical master equation"; $\SN{P,P}=0$. Here the bracket is the canonical Schouten--Nijenhuis bracket on $\Pi T^{*}M$. The original formulation of higher Poisson manifolds was laid down by Voronov \cite{voronov-2004,voronov-2005}. A similar notion was put forward by de Azc$\acute{\textnormal{a}}$rraga et.al \cite{deAzcarraga:1996zk,deAzcarraga:1996jk}. \\

\noindent Associated with the higher Poisson structure is a homotopy Poisson algebra on $C^{\infty}(M)$. That is an $L_{\infty}$-algebra in the sense of Lada \& Stasheff  \cite{Lada:1992wc} (suitably ``superised") such that the series of brackets are multi-derivations over the (supercommutative) product of functions.  The series of higher Poisson brackets are given by;

 \begin{equation}
 \{f_{1}, f_{2}, \cdots, f_{r} \}_{P} = \left.\Dleft \cdots\Dleft \Dleft P,f_{1}\Dright,f_{2}\Dright \cdots, f_{r}\Dright \right|_{M},
\end{equation}

\noindent where $f_{I} \in C^{\infty}(M)$. The ``classical master equation" is directly equivalent to the higher Poisson brackets satisfying the so-called Jacobiators. This is fundamental as without this condition the series of higher brackets would not form an $L_{\infty}$-algebra.\\

 \noindent In this note we show how to construct a series of higher brackets on the algebra of differential forms\footnote{we define the algebra of differential forms over a supermanifold to be $\Omega^{*}(M) := C^{\infty}(\Pi TM)$. As we will not delve into the theory of integration over supermanifolds, such a definition is perfectly adequate for our purposes. } on a higher Poisson manifold. We do this by following Koszul \cite{Koszul;1985} (also see \cite{Akman:1995tm,Bering:1996kw,voronov-2004}), as \emph{higher antibrackets} generated by the Lie derivative along the higher Poisson structure. The $n+1$-th bracket in essence measures the failure of $n$-th  bracket to be a multi-derivation. As the Poisson structure is an even multivector field, the Lie derivative is an odd operator acting on differential forms and thus the series of brackets are odd. The ``classical master equation" guarantees that the Lie derivative along the higher Poisson structure is nilpotent. Such a series of brackets forms an $L_{\infty}$-algebra commonly referred to as a homotopy BV-algebra. \\

\noindent We will refer to this series of brackets as higher Koszul--Schouten brackets.  These brackets are considered as  the natural analogues of the classical Koszul--Schouten bracket \cite{Kosmann-Schwarzbach:1995,Kosmann-Schwarzbach:1996,Koszul;1985}. We show that the exterior derivative acts as a differential over the higher Koszul--Schouten brackets, just as it does in the classical case. Thus we have what we will call a differential homotopy BV-algebra.\\

\noindent It must be noted that Khudaverdian \& Voronov \cite{khudaverdian-2008} constructed a homotopy Schouten algebra on the algebra of differential forms over a higher Poisson manifold. That is they have an odd analogue of a homotopy Poisson algebra constructed from the higher Poisson structure. Importantly, the higher Schouten brackets are multi-derivations over the algebra of differential forms. This is in stark contrast to the higher Koszul--Schouten brackets.\\

\noindent Thus, there is at least two ways to equip the algebra of differential forms over a higher Poisson manifold with the structure of an $L_{\infty}$-algebra. However, these two constructions are not completely independent. Due to ideas present in \cite{voronov-2004}, the higher Schouten brackets can be thought of as the ``classical limit" of the Koszul--Schouten brackets. It is not surprising that some relation exists between the two classes of higher bracket as both are build from the initial geometric data of a higher Poisson structure. \\

\noindent The distinction between Koszul--Schouten and Schouten brackets in the case  of classical (binary) Poisson manifolds is non-existent. Simply put, there is no tri-bracket or higher and the distinction is not present.\\

\noindent Although we will not present details here, it is clear that the constructions presented in this note carry over to higher Poisson structures on Lie algebroids.\\

\noindent However, it is clear that a direct analogue for higher Schouten manifolds does not exist. Associated to such a structure is a homotopy Poisson algebra on differential forms over the supermanifold. One can construct the higher Poisson structure  using arguments almost identical to that found in \cite{khudaverdian-2008}.  As such brackets are not all odd, no operator can generate something resembling a homotopy Poisson algebra in some ``classical limit". \\

\noindent As is customary in \emph{super-mathematics}, the prefix ``super" will generally be omitted. For example, by manifold we explicitly mean supermanifold.  We will denote the parity of an object $A$, by \emph{tilde}; $\widetilde{A} \in \mathds{Z}_{2}$. By \emph{even} or \emph{odd} we will be explicitly referring to parity. By a \emph{Q-manifold} will mean the pair $(M,Q)$ where $M$ is a supermanifold and $Q \in \Vect(M)$ is an odd vector field known as a \emph{homological vector field} satisfying $[Q,Q]=0$. \\

\noindent The notion of homotopy Poisson/Schouten/BV algebra as used here is much more restrictive that that found elsewhere in the literature \cite{galvezcarrillo-2009,Ginzburg-1994}. We make no use of the theory of operads. However, the notion used throughout this work seems very well suited to geometric considerations and suits the purposes explored here. \\

\newpage

\section{Higher Koszul--Schouten brackets}\label{section2}

\noindent Differential forms on a manifold $M$,  are understood as functions on antitangent bundle  $\Pi TM$. In natural local coordinates $\{ x^{A}, dx^{A}\}$ on $\Pi TM$ an arbitrary  differential form is locally given by $\alpha(x,dx) \in \Omega^{*}(M) = C^{\infty}(\Pi TM)$. The parities of the local coordinates  are $\widetilde{x}^{A} = \widetilde{A}$ and $d\widetilde{x}^{A} = (\widetilde{A}+1)$.  Under transformations induced by coordinate changes on the base the fibre coordinates transform as $d\overline{x}^{A} = dx^{B} \left( \frac{\partial \overline{x}^{A}}{\partial x^{B}}\right)$.\\

\noindent Similarly, multivector fields on $M$, are identified with functions on the anticotangent bundle  $\Pi T^{*}M$. We pick  natural local coordinates $\{x^{A}, x^{*}_{A}\}$, such that  $\widetilde{x}^{*}_{A} = (\widetilde{A}+1)$. Multivector fields  on $M$ are locally described  by $X(x,x^{*}) \in  \mathfrak{X}^{*}(M) = C^{\infty}(\Pi T^{*}M)$. The fibre coordinates transform as $\overline{x}^{*}_{A} = \left( \frac{\partial x^{B}}{\partial \overline{x}^{A}} \right)x^{*}_{B}$ under transformations induced by coordinate changes on the base. A filtration can be defined on multivectors (and differential forms) that are polynomial in fibre coordinates. We will refer to this as multivector degree.  Where no confusion will arise we will simply use degree for multivector degree.   We will call a multivector field an $r$-vector if it is a monomial of  degree $r$ in the fibre coordinates.\\

\noindent \textbf{Warning:} The parity of an object is generally independent of the degree. By odd and even we will be explicitly referring to parity and not weight.\\

\noindent Importantly, the anticotangent bundle comes equipped with a canonical odd symplectic structure. That is we have an odd Poisson bracket on $\mathfrak{X}^{*}(M)$. This bracket is the \textbf{Schouten--Nijenhuist bracket}.\\

\noindent From an arbitary multivector field $X \in \mathfrak{X}^{*}(M)$ we can associate a differential operator\footnote{Generically it will in fact be a pseudo-differential operator if we do not restrict attention to multivectors that are polynomial in  $x^{*}$.} acting on differential forms on $M$ known as the \textbf{Lie derivative};

\begin{equation}
X \rightsquigarrow L_{X} =     [d, i_{X}],
\end{equation}

\noindent where $d = dx^{A}\frac{\partial}{\partial x^{A}}$ is the \textbf{exterior derivative} and $i_{X} = (-1)^{\tilde{X}}X(x, \partial/\partial dx)$ is the \textbf{interior product}. As a differential operator, the Lie derivative is of order equal to that of the degree of the multivector field (assuming it is defined).\\

\noindent The  Lie derivative  is clearly not a derivation on the algebra of differential forms, apart form the isolated case of vector fields. Instead of the  derivation property we have;

\begin{equation}
 L_{\SN{X,Y}} = [L_{X}, L_{Y}].
\end{equation}

\noindent Thus we see directly that for a  higher Poisson structure $P$, the Lie derivative is nilpotent;
\begin{equation}
(L_{P})^{2}=  \frac{1}{2}[L_{P} ,L_{P}] = \frac{1}{2}L_{\SN{P,P}} =0.
\end{equation}

\noindent A further property needed later on is that the exterior derivative is natural with respect to the Lie derivative, that is;

\begin{equation}
[d, L_{X}] = 0.
\end{equation}

\begin{remark}
The Lie derivative of a differential form along a multivector field should be understood as the infinitesimal action of $\Can(\Pi T^{*}M)$ on differential forms on $M$ (functions on $\Pi TM$). See \cite{khudaverdian-2006,Khudaverdian:2000zt,Khudaverdian:1999mz,Schwarz:1992nx} for more details.
\end{remark}

\begin{definition}
Let $(M,P)$ be a higher Poisson manifold. The \textbf{higher Koszul--Schouten brackets}  on $\Omega^{*}(M)$ are the higher derived brackets generated by the operator $L_{P} = [d,i_{P}]$,
\begin{equation}
[\alpha_{1}, \alpha_{2}, \cdots , \alpha_{r}]_{P}  := [\cdots[[L_{P},\alpha_{1}],\alpha_{2}]\cdots, \alpha_{r}]\mathds{1},
\end{equation}
with $\alpha_{i} \in \Omega^{*}(M)$.
\end{definition}

\noindent Here $\mathds{1}$ is the constant value $1$ zero form. That is the identity element in $\Omega^{*}(M)$. As the Lie derivative along the higher Poisson structure  is odd and  nilpotent the series on higher Koszul--Schouten brackets form a homotopy BV-algebra.\\

\noindent We know the Lie derivative $L_{P}$ is a differential operator of order equal to the degree (should it be defined) of $P$. The order of $L_{P}$ is at most $k$ if  $\Phi_{P}^{r}$ vanishes (for all $\alpha_{i}$) for $r \geq k+1$. Thus for any multivector of (finite) degree $k$, the above series of brackets terminates after  the $k$ place.  The top non-zero bracket is a multi-derivation. The $(r+1)$ bracket is  the obstruction to the Leibnitz rule for the $r$ bracket. However, it should be noted that we do not require polynomial multivector fields in order to define the higher Koszul--Schouten brackets.\\

\begin{proposition}
The exterior derivative satisfies a derivation rule over the Koszul brackets. Specifically,
\begin{equation}
d [\alpha_{1}, \cdots , \alpha_{r}]_{P} + \sum_{i=1}^{r}(-1)^{\epsilon_{i}}[\alpha_{1}, \cdots,d\alpha_{i}, \cdots \alpha_{r} ]_{P} =0,
\end{equation}
where
\begin{equation}
\epsilon_{j} = \left\{\begin{array}{cc}
0 & j = 1\\
\sum^{j-1}_{k=1} \widetilde{\alpha}_{k}&  j >1.
\end{array}\right.
\end{equation}
\end{proposition}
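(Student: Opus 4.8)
The plan is to prove the identity at the level of operators on $\Omega^{*}(M)$. Following the conventions above, treat each $\alpha_{i}$ as the operator of left multiplication by the form $\alpha_{i}$, and commute the exterior derivative past the nested commutator
\[ O_{r} := [\cdots[[L_{P},\alpha_{1}],\alpha_{2}]\cdots,\alpha_{r}] \]
that defines $[\alpha_{1},\dots,\alpha_{r}]_{P}=O_{r}\mathds{1}$. Two facts drive the argument. First, since $d$ is an odd derivation of the supercommutative product of differential forms, the graded commutator of $d$ with multiplication by $\alpha_{i}$ is multiplication by $d\alpha_{i}$; in the notation of this section, $[d,\alpha_{i}]=d\alpha_{i}$. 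Second, $[d,L_{P}]=0$ by the naturality property $[d,L_{X}]=0$ recorded above. The graded Jacobi identity for commutators of operators then allows one to push $d$ inwards one slot at a time.

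Concretely, I would introduce $O_{0}:=L_{P}$ and $O_{i}:=[O_{i-1},\alpha_{i}]$, so that $O_{i}$ has parity $\widetilde{O}_{i}=1+\sum_{k\le i}\widetilde{\alpha}_{k}$, the leading $1$ appearing because $L_{P}$ is odd. Applying the graded Jacobi identity to $[d,O_{i}]=[d,[O_{i-1},\alpha_{i}]]$ yields
\[ [d,O_{i}] = \big[[d,O_{i-1}],\alpha_{i}\big] + (-1)^{\widetilde{O}_{i-1}}\big[O_{i-1},d\alpha_{i}\big], \]
and since $\widetilde{O}_{i-1}=1+\epsilon_{i}$ the second sign prefactor is exactly $-(-1)^{\epsilon_{i}}$. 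Starting from $[d,O_{0}]=[d,L_{P}]=0$ and iterating, an easy induction on $r$ produces
\[ [d,O_{r}] = -\sum_{i=1}^{r}(-1)^{\epsilon_{i}}\,[\cdots[[L_{P},\alpha_{1}],\dots,d\alpha_{i}],\dots,\alpha_{r}]. \]

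To finish, I would evaluate both sides on $\mathds{1}$. Because $d\mathds{1}=0$, the operator identity $d\circ O_{r}=[d,O_{r}]+(-1)^{\widetilde{O}_{r}}O_{r}\circ d$ applied to $\mathds{1}$ gives $d[\alpha_{1},\dots,\alpha_{r}]_{P}=[d,O_{r}]\mathds{1}$, so the previous display reads
\[ d[\alpha_{1},\dots,\alpha_{r}]_{P} = -\sum_{i=1}^{r}(-1)^{\epsilon_{i}}[\alpha_{1},\dots,d\alpha_{i},\dots,\alpha_{r}]_{P}, \]
which is the assertion after transposing the sum to the left. The only delicate point, and the step I would check most carefully, is the sign bookkeeping in the graded Jacobi identity: one must keep track of the parity of the partially assembled operator $O_{i-1}$, including the extra $+1$ contributed by the odd operator $L_{P}$, and verify that replacing $\alpha_{i}$ by the parity-shifted form $d\alpha_{i}$ does not disturb the signs attached to the later slots $k>i$. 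Everything else reduces to the Leibniz rule for $d$ and repeated application of the graded Jacobi identity.
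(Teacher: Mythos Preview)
Your proof is correct and follows exactly the route indicated in the paper: the paper's proof is the single sentence ``follows directly from the naturality of the exterior derivative with respect to the generalised Lie derivative; $[d,L_{P}]=0$,'' and what you have written is precisely the detailed unpacking of that statement via the graded Jacobi identity and evaluation at $\mathds{1}$. There is no substantive difference in approach.
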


\begin{proof}
The above proposition follows directly from the naturality of the exterior derivative with respect to the generalised Lie derivative; $[d,L_{P}]=0$.
\proofend
\end{proof}

\noindent We take the attitude that the higher Koszul--Schouten brackets give a clear geometric example of ``higher antibrackets", \cite{Akman:1995tm, Alfaro:1995vw,Bering:2006eb,Bering:1996kw}. Slightly more than this, we have a ``differential homotopy BV-algebra" with the differential being provided by the exterior derivative.\\

\section{Relations between higher Poisson and higher Koszul--Schouten brackets}\label{section3}
\noindent In section we will generalise the known relations between the classical (binary) Poisson and Koszul--Schouten brackets. In order to achieve this we need to restrict our attention to finite degree Poisson structures. This is clear as although the higher Poisson brackets rely only on the Taylor expansion of the higher Poisson structure near $M \subset \Pi T^{*}M$ the higher Koszul--Schouten brackets require the full Poisson structure. First we need a lemma.\\

\begin{lemma}\label{lemma poisson bracket}
Let $P\in \mathfrak{X}^{*}(M)$ be a higher Poisson structure of degree $k$. Then the $r$-th higher Poisson bracket is given by
\begin{equation}
\{f_{1}, f_{2}, \cdots f_{r} \}_{P} = - L_{\stackrel{r}{P}}(f_{1} df_{2} \dots df_{r})=  i_{\stackrel{r}{P}}(df_{1}df_{2} \cdots  df_{r}).
\end{equation}
\end{lemma}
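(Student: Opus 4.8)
The plan is to prove the two equalities in turn, starting from the observation that only the $r$-vector component $\stackrel{r}{P}$ of $P$ can enter the $r$-th bracket. This is a degree count: for $f\in C^{\infty}(M)$ the bracket $\SN{\stackrel{m}{P},f}$ has multivector degree $m-1$, so $\SN{\cdots\SN{\stackrel{m}{P},f_{1}},\cdots,f_{r}}$ has degree $m-r$, and the restriction to $M$ retains only the degree-zero part; hence $m=r$ and $\{f_{1},\dots,f_{r}\}_{P}=\SN{\cdots\SN{\stackrel{r}{P},f_{1}},\cdots,f_{r}}$, the restriction being now superfluous.

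The second equality, $i_{\stackrel{r}{P}}(df_{1}\cdots df_{r})=-L_{\stackrel{r}{P}}(f_{1}\,df_{2}\cdots df_{r})$, comes for free from $L_{\stackrel{r}{P}}=[d,i_{\stackrel{r}{P}}]$ and two elementary facts: $i_{\stackrel{r}{P}}$ is a differential operator of order $r$ in the $\partial/\partial(dx)$, hence annihilates every form of multivector degree $<r$; and $d(f_{1}\,df_{2}\cdots df_{r})=df_{1}\,df_{2}\cdots df_{r}$ since $d^{2}=0$. Thus $L_{\stackrel{r}{P}}(f_{1}\,df_{2}\cdots df_{r})=d\,i_{\stackrel{r}{P}}(f_{1}\,df_{2}\cdots df_{r})-i_{\stackrel{r}{P}}\,d(f_{1}\,df_{2}\cdots df_{r})$: the first term vanishes because $f_{1}\,df_{2}\cdots df_{r}$ has degree $r-1$, and the second is exactly $-i_{\stackrel{r}{P}}(df_{1}\cdots df_{r})$. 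No signs need to be tracked here.

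For the first equality it remains to show $\{f_{1},\dots,f_{r}\}_{P}=i_{\stackrel{r}{P}}(df_{1}\cdots df_{r})$, and I would do this by iteration from the identity
\[
i_{X}(df\wedge\alpha)=\pm\,df\wedge i_{X}(\alpha)\;\pm\;i_{\SN{X,f}}(\alpha),
\]
valid for any multivector $X$, any $f\in C^{\infty}(M)$ and any form $\alpha$. This identity reflects the fact that $X\mapsto i_{X}$ is an algebra homomorphism sending $x^{*}_{A}$ to $\partial/\partial(dx^{A})$, and that it intertwines the odd Hamiltonian derivation $\sum_{A}(\partial f/\partial x^{A})\,\partial/\partial x^{*}_{A}$ generating $\SN{\cdot,f}$ with the graded commutator $[df\wedge(-),\,\cdot\,]$; it is checked in one line in natural coordinates using $\SN{x^{*}_{A},g}=\pm\,\partial g/\partial x^{A}$. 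Applying it with $X=\stackrel{r}{P}$ and $\alpha=df_{2}\cdots df_{r}$, the wedge term dies because an order-$r$ operator kills the $(r-1)$-form $df_{2}\cdots df_{r}$; so $i_{\stackrel{r}{P}}(df_{1}\cdots df_{r})=\pm\,i_{\SN{\stackrel{r}{P},f_{1}}}(df_{2}\cdots df_{r})$. Since $\SN{\stackrel{r}{P},f_{1}}$ has degree $r-1$ and $df_{2}\cdots df_{r}$ has degree $r-1$, the same step applies again, and after $r$ iterations every wedge term has dropped out, leaving $\pm\,i_{\SN{\cdots\SN{\stackrel{r}{P},f_{1}},\cdots,f_{r}}}(\mathds{1})=\pm\,\SN{\cdots\SN{\stackrel{r}{P},f_{1}},\cdots,f_{r}}$, which is $\pm\{f_{1},\dots,f_{r}\}_{P}$ by the first paragraph.

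The only genuine obstacle is pinning the overall sign down to $+1$. I expect the parity-dependent signs in the $r$-fold iteration to telescope into a single Koszul factor which then cancels; alternatively, and more mechanically, one can compute both $\{f_{1},\dots,f_{r}\}_{P}$ and $i_{\stackrel{r}{P}}(df_{1}\cdots df_{r})$ outright in natural coordinates, where, writing $\stackrel{r}{P}=\tfrac{1}{r!}P^{A_{1}\cdots A_{r}}\,x^{*}_{A_{r}}\cdots x^{*}_{A_{1}}$, both collapse after antisymmetrisation to $\pm\,P^{A_{1}\cdots A_{r}}\,\partial_{A_{1}}f_{1}\cdots\partial_{A_{r}}f_{r}$ and one simply compares prefactors. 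Everything else in the argument is structural.
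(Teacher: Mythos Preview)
Your argument is correct. The degree count isolating $\stackrel{r}{P}$ is right, and your derivation of the second equality from $L_{\stackrel{r}{P}}=[d,i_{\stackrel{r}{P}}]$ is clean: with $P$ even the commutator is $d\,i_{\stackrel{r}{P}}-i_{\stackrel{r}{P}}\,d$, the first summand dies on the $(r{-}1)$-form $f_{1}\,df_{2}\cdots df_{r}$, and the second gives exactly $-i_{\stackrel{r}{P}}(df_{1}\cdots df_{r})$. The iterative reduction via $i_{X}(df\wedge\alpha)=\pm\,df\wedge i_{X}\alpha\pm i_{\SN{X,f}}\alpha$ is also valid and is the conceptual content behind the identity; your honesty about the residual sign is appropriate, and your fallback of comparing coefficients in natural coordinates is exactly what settles it.

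For comparison, the paper's own proof is a single sentence: ``The above can be proved via direct computation.'' So you have not merely matched the paper's argument but supplied one where the paper gives none. Your structural route via iterated contraction has the advantage of explaining \emph{why} the higher Poisson bracket equals the interior product --- it makes manifest that each $\SN{\,\cdot\,,f_{i}}$ on the multivector side corresponds to stripping off one $df_{i}$ on the form side --- whereas a bare coordinate computation verifies the equality without illuminating it. The only thing the brute-force approach buys you is the sign for free; if you want a self-contained proof, carry out the coordinate check once (as you suggest in your final paragraph) and keep the structural argument as the explanation.
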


\begin{proof}
The above can be proved via direct computation.  \proofend
\end{proof}

\begin{proposition}\label{relation poisson koszul--schouten}
Let $P\in \mathfrak{X}^{*}(M)$ be a higher Poisson structure of degree $k$. Then the higher Koszul--Schouten brackets satisfy the following;
\begin{enumerate}
    \item $[f_{1}, df_{2} \cdots , df_{r}]_{P} =  -\{f_{1}, f_{2}, \cdots , f_{r} \}_{P}$,
    \item $[f_{1}, f_{2}, \cdots , f_{r}]_{P}= 0$ \hspace{20pt} for $r>1$,
    \item $[\emptyset]_{P} =  d\{ \emptyset\}_{P}$,
    \item $[df_{1}, df_{2}, \cdots , df_{r}]_{P}= d\{f_{1}, f_{2}, \cdots, f_{r}\}_{P}$,
\end{enumerate}
with $f_{I} \in C^{\infty}(M)$.
\end{proposition}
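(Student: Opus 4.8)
The plan is to obtain parts 3 and 4 as formal consequences of part 1, the preceding Proposition (the derivation rule for $d$ over the Koszul brackets) and $d^{2}=0$, and to do the real work in parts 1 and 2 by unpacking the derived brackets through $L_{P}=[d,i_{P}]$ and the graded Jacobi identity. Throughout I identify a differential form with the operator of multiplication by it, and for an operator $\Delta$ on $\Omega^{*}(M)$ I abbreviate $\Phi^{r}_{\Delta}(\alpha_{1}, \cdots, \alpha_{r}):=[\cdots[[\Delta,\alpha_{1}],\alpha_{2}]\cdots,\alpha_{r}]\mathds{1}$, so that $[\alpha_{1}, \cdots, \alpha_{r}]_{P}=\Phi^{r}_{L_{P}}(\alpha_{1}, \cdots, \alpha_{r})$. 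The two elementary facts I would use are: $[d,\alpha]=d\alpha$ as multiplication operators, hence $[d,df]=0$; and $[i_{P},f]=0$ for every $f\in C^{\infty}(M)$, because $i_{P}$ is built only from the $\partial/\partial(dx^{A})$'s together with a degree-zero multiplication part, while $f$ carries no $dx$ and $\Omega^{*}(M)$ is supercommutative.

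From these the graded Jacobi identity gives the key operator identity
\begin{equation}
[L_{P},f_{1}]=[[d,i_{P}],f_{1}]=[d,[i_{P},f_{1}]]-[i_{P},[d,f_{1}]]=-[i_{P},df_{1}].
\end{equation}
Part 2 follows immediately: $[i_{P},df_{1}]$ is again a ``vertical'' operator (a polynomial in the $\partial/\partial(dx^{A})$ with coefficients functions on the base), so it commutes with multiplication by $f_{2}\in C^{\infty}(M)$; hence $[[L_{P},f_{1}],f_{2}]=0$ as an operator, and therefore $[f_{1}, \cdots, f_{r}]_{P}=0$ for all $r>1$. (Alternatively: part 1 for $r=1$ gives $[f]_{P}=-\{f\}_{P}$, which is a derivation because $\SN{P,\cdot}$ is Leibniz; since the $(r+1)$-ary bracket obstructs the Leibniz rule for the $r$-ary one, induction does the rest.)

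For part 1 I would commute the identity above successively with $df_{2}, \cdots, df_{r}$. Because $[d,df_{k}]=0$, a short induction using graded Jacobi at each step gives
\begin{equation}
[\cdots[[L_{P},f_{1}],df_{2}]\cdots,df_{r}]=[d,A_{r}]-B_{r},
\end{equation}
where $A_{r}:=[\cdots[[i_{P},f_{1}],df_{2}]\cdots,df_{r}]$ and $B_{r}:=[\cdots[[i_{P},df_{1}],df_{2}]\cdots,df_{r}]$. Applying this to $\mathds{1}$ and using $d\mathds{1}=0$ yields $[f_{1},df_{2}, \cdots, df_{r}]_{P}=d(A_{r}\mathds{1})-B_{r}\mathds{1}$, in which $A_{r}\mathds{1}=\Phi^{r}_{i_{P}}(f_{1},df_{2}, \cdots, df_{r})=0$ because $[i_{P},f_{1}]=0$. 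So everything reduces to $B_{r}\mathds{1}=\Phi^{r}_{i_{P}}(df_{1}, \cdots, df_{r})$. Writing $i_{P}=\sum_{j}i_{\stackrel{j}{P}}$ by multivector degree: the summands with $j<r$ have differential order $<r$ and drop out of an $r$-fold derived bracket; the summands with $j>r$ leave at least one unsaturated $\partial/\partial(dx^{A})$ after the $r$ commutators (the $df_{i}$ being linear in $dx$, no two such derivatives can land on the same argument), which then annihilates $\mathds{1}$; and the $j=r$ summand is exactly the total contraction, $\Phi^{r}_{i_{\stackrel{r}{P}}}(df_{1}, \cdots, df_{r})=i_{\stackrel{r}{P}}(df_{1}df_{2}\cdots df_{r})$. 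By Lemma~\ref{lemma poisson bracket} this equals $\{f_{1}, \cdots, f_{r}\}_{P}$, proving part 1.

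Part 3 is then the one-line computation $[\emptyset]_{P}=L_{P}\mathds{1}=[d,i_{P}]\mathds{1}=d(i_{P}\mathds{1})=d\,\stackrel{0}{P}=d(P|_{M})=d\{\emptyset\}_{P}$. For part 4 I would substitute $(\alpha_{1}, \cdots, \alpha_{r})=(f_{1},df_{2}, \cdots, df_{r})$ into the preceding Proposition; every term containing some $d(df_{i})$ vanishes, leaving $d[f_{1},df_{2}, \cdots, df_{r}]_{P}+[df_{1},df_{2}, \cdots, df_{r}]_{P}=0$, and part 1 turns this into $[df_{1}, \cdots, df_{r}]_{P}=d\{f_{1}, \cdots, f_{r}\}_{P}$. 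The main obstacle I anticipate is the identification $\Phi^{r}_{i_{\stackrel{r}{P}}}(df_{1}, \cdots, df_{r})=i_{\stackrel{r}{P}}(df_{1}\cdots df_{r})$ \emph{with the correct sign}: this says that the top-order member of the derived-bracket hierarchy of the order-$r$ operator $i_{\stackrel{r}{P}}$, evaluated on one-forms, returns the operator itself, and verifying it — together with the vanishing statements for $j\neq r$ — is precisely the sign bookkeeping behind the ``direct computation'' proof of Lemma~\ref{lemma poisson bracket}.
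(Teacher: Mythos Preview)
Your proof is correct. Parts 3 and 4 match the paper's argument exactly: a direct one-line computation for the empty bracket, and an application of the $d$-derivation proposition together with part 1 for the bracket of exact one-forms.

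For parts 1 and 2, however, you take a genuinely different route from the paper. The paper argues purely by degree counting: the $r$-ary Koszul--Schouten bracket shifts form degree by $1-r$, and $L_{\stackrel{s}{P}}$ has differential order $s$, so on inputs $(f_{1},df_{2},\dots,df_{r})$ only the component $\stackrel{r}{P}$ survives and the derived bracket collapses to $L_{\stackrel{r}{P}}(f_{1}\,df_{2}\cdots df_{r})$; then Lemma~\ref{lemma poisson bracket} (the $L$-half) finishes part 1, and the same form-degree count gives part 2. You instead exploit the splitting $L_{P}=[d,i_{P}]$ through the graded Jacobi identity to obtain the operator identity $[L_{P},f]=-[i_{P},df]$, which immediately yields part 2 (since $[i_{P},df]$ is again vertical and commutes with any $g\in C^{\infty}(M)$) and reduces part 1 to $\Phi^{r}_{i_{P}}(df_{1},\dots,df_{r})$; you then invoke the $i$-half of Lemma~\ref{lemma poisson bracket}. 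Your argument is more explicit and avoids the somewhat elliptic ``form degree $(1-r)$'' assertion (which, for inhomogeneous $P$, really hides a component-by-component analysis); the paper's argument is terser and stays with $L_{P}$ throughout. Both ultimately rest on the same two ingredients --- degree/order counting to isolate $\stackrel{r}{P}$, and Lemma~\ref{lemma poisson bracket} --- so the difference is one of presentation rather than substance. Your flagged obstacle, the sign in $\Phi^{r}_{i_{\stackrel{r}{P}}}(df_{1},\dots,df_{r})=i_{\stackrel{r}{P}}(df_{1}\cdots df_{r})$, is exactly the content the paper defers to ``direct computation'' in Lemma~\ref{lemma poisson bracket}, so you are not missing anything the paper supplies.
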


\begin{proof}
By counting the (form) degree of $[f_{1}, df_{2} \cdots , df_{r}]_{P}$ it is clear that the only contribution is from the $s \leq r$ components of $P$. Furthermore, we know that the only contribution to the $r$-bracket from a degree $r$ multivector is the top component. By expanding out the definition of the higher Koszul--Schouten brackets we see that we have the simplification

\begin{equation}
[f_{1}, df_{2} \cdots , df_{r}]_{P}= L_{\stackrel{r}{P}}(f_{1} df_{2} \cdots df_{r}),
\end{equation}

\noindent and via Lemma (\ref{lemma poisson bracket})  we obtain \emph{1}. As the higher Koszul--Schouten bracket carries form degree $(1-r)$ it is clear that for $r>1$ \emph{2}. holds. (The $r=1$ case is contained within \emph{1}.). For $r=0$ we have \emph{3}. which can easily be seen using local expressions. \emph{4}. follows directly from \emph{1}. and the derivation property of the exterior derivative.  \proofend
\end{proof}

\noindent The statements \emph{3}. and \emph{4}. in Proposition (\ref{relation poisson koszul--schouten}) show that the exterior derivative provides an morphism as $L_{\infty}$-algebras between the higher Poisson and higher Koszul--Schouten brackets.

\begin{remark}
The above relations between the higher Poisson and higher Koszul--Schouten brackets reduce to the case of classical Poisson structures for $2$-Poisson structures, up to conventions.
\end{remark}

\noindent If we have a classical Poisson manifold, then it is well known that the Poisson anchor provides a homeomorphism between the (binary) Koszul--Schouten bracket and the Schouten--Nijenhuis  bracket as

\begin{equation}
\phi^{*}_{P} [\alpha, \beta]_{P} = \SN{\phi_{P}^{*}\alpha, \phi_{P}^{*}\beta }.
\end{equation}

\noindent However, it is not at all obvious how this relation generalises to higher Poisson manifolds. In particular, the higher Koszul--Schouten brackets are a series of higher brackets as where there is only one (binary) Schouten--Nijenhuis. Thus the higher Poisson anchor cannot be a simple map between the brackets. It is expected that  this could be formulated in terms of an $L_{\infty}$-algebra morphism.

\section{From higher Koszul--Schouten brackets to higher Schouten brackets}\label{section4}

\noindent Khudaverdian \& Voronov  \cite{khudaverdian-2008} define a higher Schouten algebra on $\Omega^{*}(M)$. At first their constructions seem unrelated to constructions presented here.  The Koszul--Schouten brackets defined in this work provide the algebra of differential forms with the structure on a homotopy BV algebra. That is the  $n+1$ Koszul--Schouten bracket can be defined recursively as the failure of $n$ Koszul--Schouten bracket to be a multi-derivation. In  the case of higher  Schouten brackets one loses this recursive definition and replaces it with a strict Leibnitz rule. That is the higher Schouten brackets form a higher analouge of an \emph{odd Poisson algebra}. We shall see that the higher Schouten brackets of Khudaverdian \& Voronov are the \emph{classical limit} \cite{voronov-2004} of the higher  Koszul--Schouten brackets. To show this we need to ``deform" the higher Poisson structure as

\begin{equation}
P \rightsquigarrow P[\hbar] = \sum_{i=0}^{\infty} \frac{(\hbar)^{i}}{i!}P^{A_{1} \cdots A_{i}}(x) x^{*}_{A_{i}} \cdots x^{*}_{A_{1}},
\end{equation}

\noindent where $\hbar$ is an \emph{even formal deformation parameter}. In essence it counts the degree of the components of the multivector field. As usual, the infinite sum is understood formally.\\

\begin{definition}
The higher Schouten brackets associated with the higher Poisson structure \newline $P \in \mathfrak{X}^{*}(M)$ are defined as
\begin{equation}
(\alpha_{1},\alpha_{2}, \cdots , \alpha_{r})_{P} = \lim_{\hbar \rightarrow 0} \left(\frac{1}{\hbar}\right)^{r} [\alpha_{1}, \alpha_{2}, \cdots, \alpha_{r}]_{P[\hbar]},
\end{equation}
with $\alpha_{I} \in \Omega^{*}(M)$.
\end{definition}

\noindent As this classical limit does not effect the symmetries or the Jacobiators, the Schouten brackets do form a genuine $L_{\infty}$-algebra. The multi-derivation property becomes

\begin{eqnarray}
\nonumber (\alpha_{1}, \cdots, \alpha_{r}\alpha_{r+1})_{P} &=& (\alpha_{1}, \cdots \alpha_{r})_{P}\alpha_{r+1}  \pm \alpha_{r} (\alpha_{1}, \cdots , \alpha_{r+1})_{P}\\
 &+& \lim_{\hbar\rightarrow 0} \hbar (\alpha_{1}, \cdots, \alpha_{r}, \alpha_{r+1})_{P},
 \end{eqnarray}

 \noindent which clearly gives the strict Leibnitz rule.\\

\begin{theorem} The Schouten brackets can be cast into the form
\begin{eqnarray}\label{higher schouten}
\nonumber(\alpha_{1}, \alpha_{2},\cdots, \alpha_{r} )_{P} &=& [\alpha_{1}, \alpha_{2}, \cdots , \alpha_{r}]_{\stackrel{r}{P}}\\
&=& \left.\{\cdots \{ \{K_{P}, \alpha_{1}  \}, \alpha_{2} \},\cdots  ,\alpha_{r} \}\right|_{\Pi TM},
\end{eqnarray}
where $K_{P}\in C^{\infty}(T^{*}(\Pi TM))$ is a higher Schouten structure, i.e. it is odd and $\{K_{P}, K_{P} \}=0$. Here $\{ \bullet, \bullet \}$ is the canonical Poisson bracket on $T^{*}(\Pi TM)$.
\end{theorem}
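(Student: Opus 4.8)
The plan is to prove the statement in two stages: first the left-hand equality $(\alpha_{1},\dots,\alpha_{r})_{P} = [\alpha_{1},\dots,\alpha_{r}]_{\stackrel{r}{P}}$, by evaluating the $\hbar\to 0$ limit, and then the identification of $K_{P}$ as the principal symbol of the Lie derivative operator together with its two defining properties. For the first stage I would use that $X\mapsto L_{X}$ is linear, so the higher Koszul--Schouten brackets are linear in the multivector field; since $P[\hbar]$ replaces the degree-$i$ component $\stackrel{i}{P}$ of $P$ by $\hbar^{i}\stackrel{i}{P}$, one has
\[
[\alpha_{1},\dots,\alpha_{r}]_{P[\hbar]} = \sum_{i\ge 0}\hbar^{\,i}\,[\alpha_{1},\dots,\alpha_{r}]_{\stackrel{i}{P}}.
\]
Because $L_{\stackrel{i}{P}}$ is a differential operator of order $i$ (Section~\ref{section2}), the $r$-fold derived bracket $[\alpha_{1},\dots,\alpha_{r}]_{\stackrel{i}{P}}$ vanishes whenever $i<r$, so $\hbar^{-r}[\alpha_{1},\dots,\alpha_{r}]_{P[\hbar]} = \sum_{i\ge r}\hbar^{\,i-r}[\alpha_{1},\dots,\alpha_{r}]_{\stackrel{i}{P}}$, and letting $\hbar\to 0$ leaves only the $i=r$ term. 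This also confirms that the Schouten brackets have multivector degree at most $k$.

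For the second stage I would set $K_{P}:= \sum_{r\ge 0}\sigma(L_{\stackrel{r}{P}})$, where $\sigma(L_{\stackrel{r}{P}})\in C^{\infty}(T^{*}(\Pi TM))$ is the principal symbol of the order-$r$ operator $L_{\stackrel{r}{P}}=[d,i_{\stackrel{r}{P}}]$; as this operator is homogeneous of order $r$, the symbol is a fibrewise polynomial homogeneous of degree $r$ in the momenta conjugate to $(x^{A},dx^{A})$, and $K_{P}$ is fibrewise polynomial of degree at most $k$. The key step is the derived-bracket identity. In $\{\cdots\{\{K_{P},\alpha_{1}\},\alpha_{2}\},\dots,\alpha_{r}\}$, taken with the canonical bracket on $T^{*}(\Pi TM)$ and the $\alpha_{j}$ regarded as functions independent of the momenta, the contribution of $\sigma(L_{\stackrel{i}{P}})$ has momentum-degree $i-r$: it is identically zero for $i<r$ and restricts to zero on the zero section $\Pi TM\subset T^{*}(\Pi TM)$ for $i>r$, so only $i=r$ survives the restriction. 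For that term the standard fact that the iterated commutator of a homogeneous order-$r$ operator with multiplication operators extracts its symbol yields
\[
\left.\{\cdots\{\sigma(L_{\stackrel{r}{P}}),\alpha_{1}\},\dots,\alpha_{r}\}\right|_{\Pi TM} = [\cdots[[L_{\stackrel{r}{P}},\alpha_{1}],\alpha_{2}]\cdots,\alpha_{r}]\mathds{1} = [\alpha_{1},\dots,\alpha_{r}]_{\stackrel{r}{P}},
\]
which, together with the first stage, is exactly the chain of equalities in the theorem.

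The two properties of $K_{P}$ then remain. Oddness is immediate: $P$ even forces each $\stackrel{r}{P}$ even, hence $L_{\stackrel{r}{P}}$ odd (Section~\ref{section2}), hence $\sigma(L_{\stackrel{r}{P}})$ odd, hence $K_{P}$ odd. For the master equation I would use that $X\mapsto\sigma(L_{X})$ intertwines the two brackets: taking principal symbols in $L_{\SN{X,Y}}=[L_{X},L_{Y}]$, and noting that for an $r$-vector $X$ and an $s$-vector $Y$ both sides are homogeneous of momentum-degree $r+s-1$ --- the same as the degree of the $(r+s-1)$-vector $\SN{X,Y}$, which kills any potential lower-order corrections to the commutator --- one gets $\sigma(L_{\SN{X,Y}})=\{\sigma(L_{X}),\sigma(L_{Y})\}$ for the canonical Poisson bracket on $T^{*}(\Pi TM)$. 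Applying this componentwise,
\[
\{K_{P},K_{P}\}=\sum_{r,s}\{\sigma(L_{\stackrel{r}{P}}),\sigma(L_{\stackrel{s}{P}})\}=\sum_{r,s}\sigma\!\left(L_{\SN{\stackrel{r}{P},\stackrel{s}{P}}}\right)=\sigma\!\left(L_{\SN{P,P}}\right)=0,
\]
using $\SN{P,P}=0$ and linearity of $\sigma$ and $L$.

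The step I expect to be the main obstacle is the $\mathds{Z}_{2}$-graded symbol calculus underlying the second stage: one must check carefully, with all signs, that the symbol of the commutator of two differential operators is the canonical Poisson bracket of their symbols at the relevant order, and that $L_{\stackrel{r}{P}}$ is genuinely homogeneous of order exactly $r$ --- both implicit in Section~\ref{section2}; once this is granted the rest is bookkeeping. It is also worth remarking that this $K_{P}$ must agree with the higher Schouten structure of Khudaverdian \& Voronov \cite{khudaverdian-2008}, since both generate the same $L_{\infty}$-brackets by iterated derived brackets and a homotopy Schouten structure is determined by the brackets it generates; equivalently, $\{K_{P},K_{P}\}=0$ can be obtained from the general correspondence between homotopy Schouten structures on $C^{\infty}(N)$ and odd, square-zero Hamiltonians on $T^{*}N$, since the Schouten brackets are already known to form a homotopy Schouten algebra.
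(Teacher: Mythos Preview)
Your proposal is correct and follows essentially the same route as the paper: both identify $K_{P}$ as the symbol of $L_{P}$ on $T^{*}(\Pi TM)$, use the correspondence between operator commutators and canonical Poisson brackets to obtain $\{K_{P},K_{P}\}=0$ from $L_{\SN{P,P}}=0$, and use degree counting to see that only the $\stackrel{r}{P}$-component survives the $\hbar\to 0$ limit and the restriction to the zero section. The only cosmetic difference is that the paper works with the \emph{total} symbol of $L_{P}$ in one go (justified by the tensorial behaviour of the Lie derivative under $\Diff(M)$), whereas you assemble $K_{P}$ as the sum of principal symbols of the homogeneous pieces $L_{\stackrel{r}{P}}$; since each $L_{\stackrel{r}{P}}$ is exactly of order $r$, these two descriptions coincide.
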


\newpage

\begin{proof}
Let us work with natural local coordinates $\{ x^{A}, dx^{A}, p_{A}, \pi_{A} \}$ on $T^{*}(\Pi TM)$. We define the total symbol of the Lie derivative along the higher Poisson structure as
\begin{equation}
\sigma L_{P} = K_{P} = L_{P}(x,dx, p , \pi) \in C^{\infty}(T^{*}(\Pi TM)),
\end{equation}
via $\frac{\partial}{\partial x^{A}} \leftrightarrow p_{A}$ and $\frac{\partial}{\partial dx^{A}} \leftrightarrow \pi_{A}$. As the Lie derivative transforms as a tensor under morphisms induced by $\Diff(M)$ the function $K_{P}$ is well defined between natural local coordinates.\\

\noindent From the general theory of differential operators and microlocal analysis, see for example H$\ddot{\textnormal{o}}$rmander \cite{Hormander:1985III} we know that the symbol map takes commutators of operators to Poisson brackets of functions. In the current situation this extends to the total symbol due to the tensor nature of the Lie derivative between natural local coordinates. Then we have
\begin{equation}
\sigma [L_{P}, L_{P}] = \{ K_{P}, K_{P} \} =0,
\end{equation}
thus $K_{P} \in C^{\infty}(T^{*}(\Pi TM))$ is a higher Schouten structure.\\

\noindent Then consider
\begin{equation}
\nonumber \lim_{\hbar \rightarrow 0} \left[\cdots\left[ \left[L_{P[\hbar]}, \alpha_{1}  \right], \alpha_{2} \right], \cdots, \alpha_{r} \right]\mathds{1} = \left[\cdots\left[ \left[L_{\stackrel{r}{P}}, \alpha_{1}  \right], \alpha_{2} \right], \cdots, \alpha_{r} \right]\mathds{1}.
\end{equation}
\noindent As the above is a differential  form, the action on the unit form is via multiplication and as such it can be dropped. Then taking the total symbol gives
\begin{equation}
\left[\cdots\left[ \left[L_{\stackrel{r}{P}}, \alpha_{1}  \right], \alpha_{2} \right], \cdots, \alpha_{r} \right] = \left\{\cdots\left\{ \left\{\sigma L_{\stackrel{r}{P}}, \alpha_{1}  \right\}, \alpha_{2} \right\}, \cdots, \alpha_{r} \right\},
\end{equation}
\noindent Putting this together we see that
\begin{equation}
\lim_{\hbar \rightarrow 0  } [\alpha_{1}, \alpha_{2},\cdots , \alpha_{r}]_{P[\hbar]} = \left. \left\{\cdots\left\{ \left\{ K_{P}, \alpha_{1}  \right\}, \alpha_{2} \right\}, \cdots, \alpha_{r} \right\}\right|_{\Pi TM \subset \: T^{*}(\Pi TM)},
\end{equation}
and the result is established.
\proofend
\end{proof}

\noindent Via these constructions it is straight forward to see that

\begin{equation}
K_{P} = \sum_{r=0}^{\infty} \left( \frac{1}{r!} dx^{B}\frac{\partial P^{A_{1} \cdots A_{r}}}{ \partial x^{B}} \pi_{A_{1}} \cdots \pi_{A_{r}} - \frac{1}{(r-1)!}P^{A_{1} \cdots A_{r}} \pi_{A_{r}} \cdots \pi_{A_{2}} p_{A_{1}}\right),
\end{equation}

\noindent This function can then be pulled-back to $T^{*}(\Pi T^{*}M)$, via the canonical double vector bundle morphism $R :T^{*}(\Pi T^{*}M) \rightarrow T^{*}(\Pi TM) $. Via natural local coordinates $\{ x^{A}, x^{*}_{A}, p_{A}, \pi^{A}\}$ on $T^{*}(\Pi T^{*}M)$, the canonical morphism is given by $R^{*}(dx^{A})= (-1)^{\widetilde{A}}\pi^{A}$ and $R^{*}(\pi_{A})= x^{*}_{A}$. Thus we have

\begin{equation}
R^{*}K_{P} = \left( \sum_{r=0}^{\infty}   \frac{(-1)^{\widetilde{B}}}{r! }   \frac{\partial P^{A_{1} \cdots A_{r}}}{\partial x^{B}} x^{*}_{A_{r}} \cdots x^{*}_{A_{1}}  \right)\pi^{B} - \left( \sum_{r=0}^{\infty} \frac{1}{(r-1)!}P^{B A_{2} \cdots A_{r}} x^{*}_{A_{r}} \cdots x^{*}_{A_{2}} \right)p_{B},
\end{equation}

\noindent which is immediately recognised as the \emph{linear Hamiltonian}  associated with the homological vector field

\begin{equation}
Q_{P} = - \SN{P,\bullet} \in \Vect(\Pi T^{*}M).
\end{equation}

\noindent Recall that an $L_{\infty}$-algebroid structure on a vector bundle $E \rightarrow M$ is a homological vector field on $\Pi E$. Thus, there is an  $L_{\infty}$-algebroid structure on $T^{*}M$ provided by the above homological vector field (this first appears in \cite{khudaverdian-2008}).  This is the higher analogue of the  Lie algebroid associated with a Poisson manifold.\\

\noindent In essence, Khudaverdian \& Voronov reverse this process and define a map $P \rightsquigarrow K_{P}$ and use this in the definition of the Schouten brackets viz Eqn.(\ref{higher schouten}). We see that the Schouten brackets of Khudaverdian \& Voronov are in  correspondence with the higher Koszul--Schouten brackets.

\begin{remark}
Theorem (\ref{relation poisson koszul--schouten}) carries over directly to the Schouten brackets. This can be understood by counting form degrees and realising that the Koszul--Schouten and Schouten brackets coincide when restricted to functions and exterior derivatives of functions as required by the theorem. Then (up to conventions) this theorem is identical to a theorem found in  \cite{khudaverdian-2008}.
\end{remark}

\section{Concluding Remarks}

\noindent The construction of higher Poisson and higher Koszul--Schouten brackets represent a geometric generalisation of the classical brackets found in Poisson geometry and classical mechanics. That is we consider the natural generalisation of a higher order multivetor field that Schouten--Nijenhuis self-commutes and derive the theory from this starting place.   Laying behind this is the algebraic theory of higher derived brackets of Voronov \cite{voronov-2004,voronov-2005} which leads us to the theory of $L_{\infty}$-algebras viz Lada \& Stasheff \cite{Lada:1992wc}. \\

\noindent  The classical master equation is   a  ``differential geometric condition". Thus, from a geometric point of view higher Poisson structures are quite natural, as compared to say  Nambu--Poisson structures \cite{Nambu:1973qe,Takhtajan:1993vr}. Despite this, the physical applications of higher Poisson structures  as a generalised setting for classical mechanics is unclear. This is before one even begins to consider quantisation, which if possible would require some deviation from established methods.   \\

\noindent In this note we showed that the Koszul--Schouten bracket found in classical Poisson geometry generalises to the higher case, including the morphism between the Poisson and Koszul--Schouten brackets provided by the exterior derivative. We showed how they relate to the Schouten brackets found in \cite{khudaverdian-2008}.\\

\noindent What other aspects of classical Poisson geometry carry over to the higher case awaits to be explored. It should be possible to consider a generalised version of Poisson homology and cohomology for higher Poisson manifolds, for example. For the homogenous case initial work was presented by de Azc$\acute{\textnormal{a}}$rraga et.al \cite{deAzcarraga:1996jk}.\\

\noindent The structure of a higher Poisson manifold can be found lying behind the classical Batalin–-Vilkovisky antifield formulism \cite{Batalin:1981jr,Batalin:1984jr}, modulo extra gradings. Indeed, the initial motivation for this work lies in wanting a deeper understanding of the BV-formulism and the relation with classical differential geometry.\\

\noindent Thus, we see that the space of field-antifield valued differential forms comes equipped with the structures of a homotopy BV-algebra, or equivalently a homotopy Schouten algebra. However, the role that such higher brackets play in field theory awaits to be explored.

\section*{Acknowledgments}

\noindent It is a pleasure to thank the organisers of the first HEP Young Theorists' Forum at  University  College London where this work was first reported.   The author would like to thank Drs Th.Th. Voronov and H.M. Khudaverdian for many interesting discussions and guidance. A special thank you also goes to Dr Klaus Bering  for his interest in earlier versions of this work. This work was funded by  the British tax payer via an EPSRC DTA.

\begin{small}
\bibliography{preprintbib}
\end{small}
\vfill

\begin{center}
\noindent SCHOOL OF MATHEMATICS, THE UNIVERSITY OF MANCHESTER, UNITED KINGDOM, M13 9PL.\\
\noindent \emph{e-mail address}: \texttt{andrewjames.bruce@physics.org}
\end{center}
\end{document}